\newcommand{\argmin}{\mathop{\mbox{arg\,min}}}
\newcommand{\gvec}{\mathbf{g}}
\newcommand{\R}{\mathbb{R}}
\newcommand{\vvec}{\mathbf{v}}
\newcommand{\mvec}{\mathbf{m}}
\newcommand{\A}{\mathcal{A}}
\newcommand{\Regul}{\mathcal{R}}
\newcommand{\Sep}{\mathcal{S}}
\newcommand{\Radon}{A}
\newtheorem{theorem}{Theorem}[section]
\newtheorem{lemma}[theorem]{Lemma}
\title{Inner product regularized multi-energy X-ray tomography for material decomposition}
\author[1]{Salla-Maaria Latva-Äijö}
\author[2]{Filippo Zanetti}
\author[3]{Ari-Pekka Honkanen}
\author[3]{Simo Huotari}
\author[2]{Jacek Gondzio}
\author[1]{Matti Lassas}
\author[1]{Samuli Siltanen}
\affil[1]{Department of Mathematics and Statistics, University of Helsinki, Helsinki, Finland}
\affil[2]{School of Mathematics, University of Edinburgh, Scotland, UK}
\affil[3]{Department of Physics, University of Helsinki, Helsinki, Finland}
\date{June 2023}
\begin{document}

\maketitle

\section*{Abstract}

Multi-energy X-ray tomography is studied for decomposing three materials using three X-ray  energies and a classical energy-integrating detector. A novel regularization term comprises inner products between the material distribution functions, penalizing any overlap of different materials. The method is tested on real data measured of a phantom embedded with  Na$_2$SeO$_3$,  Na$_2$SeO$_4$, and elemental selenium. It is found that the two-dimensional distributions of selenium in different oxidation states can be mapped and distinguished from each other with the new algorithm. The results have  applications in material science, chemistry, biology and medicine. 


\newpage

\tableofcontents

\section{Introduction}

We study multi-energy X-ray tomography for material decomposition, generalizing the reconstruction method we introduced in \cite{gondzio2022material}. There we demonstrated with simulated data the feasibility of one inner product regularizer term in the context of dual-energy X-ray tomography for a two-material object. Here we work with three different X-ray energies and materials, applying our generalized reconstruction method to measured monochromatic X-ray data. We note that our methods assume classical energy-integrating sensors and do not need photon counting detectors.

The new regularizer takes the form 
\begin{equation}\label{regu_intro}
    \langle \gvec^{(1)},\gvec^{(2)}\rangle + \langle \gvec^{(1)},\gvec^{(3)}\rangle + \langle \gvec^{(2)}, \gvec^{(3)}\rangle,
\end{equation}
where the non-negative vector $\gvec^{(j)}$ models the distribution of material $j$ inside the target. Now the penalty (\ref{regu_intro}) promotes the {\it a priori} information that the target consists of just one material at any point, since (\ref{regu_intro}) becomes zero if at least one function in each inner product vanishes at a pixel. 

Clinical computed tomography (CT) devices are usually built with conventional X-ray tubes, which produce polychromatic broad-bandwidth beams. Such beams lead to beam-hardening artifacts and insensitivity to the chemical composition of the imaged object. 
These limitations can be overcome with  monochromatic X-ray beams. They also allow taking advantage of sharp changes in attenuation as function of energy (K-edge variations) of certain materials. 

Monochromatic data was produced with a new method, where a polychromatic X-ray beam is directed through a spherically bent crystal analyser \cite{honkanen2023monochromatic}. It monochromatises and refocuses the beam at the Rowland circle which acts as a secondary source. This makes it possible to adjust the energy of the diffracted photons and record the spatially resolved changes in the attenuation coefficients. All of this can be done easily in a traditional X-ray laboratory (as opposed to a synchrotron facility). See \cite{honkanen2023monochromatic}. 

Previous literature on the topic is mostly based on the discrete tomography approach \cite{batenburg20093d, batenburg2011dart, baumann2007discrete, herman2008advances, herman2012discrete}. Our algorithm is different from them as it is based on the regularizing penalty based on inner products.
We remark that using an energy-sensitive  photon counting detector opens up further possibilities, see e.g. \cite{long2014multi,wu2020dictionary,solem2021material}. We do not discuss such avenues of investigation further in this paper. 

Our reconstruction results suggest that the inner product regularizer is promising for material decomposition. There are potential applications in chemistry, biology and medicine.

\section{Measurement model}

We restrict here to the intersection of a physical body with a two-dimensional square $\Omega\subset\R^2$. The measured X-rays are thus assumed to travel in the plane determined by $\Omega$. This restriction is only for the simplicity of the exposition and computation; the method generalizes also to higher dimensions.

We discretize $\Omega$  into $N\times N$ square-shaped pixels. 
There are three unknowns: non-negative $N\times N$ matrices  $G^{(1)}$, $G^{(2)}$ and $G^{(3)}$ modelling the distributions of material 1, material 2 and material 3, respectively. The number $G^{(\ell)}_{i,j}\geq 0$ represents the concentration of material $\ell$ in pixel $(i,j)$, where $i$ is the row index and $j$ is the column index. In numerical computations, we represent the elements of the material matrices $(G^{(1)},G^{(2)},G^{(3)})\in(\R^{N\times N})^3$, as a vertical vector 
$$
\gvec = \left[\!\!\begin{array}{l}\gvec^{(1)}\\\gvec^{(2)}\\\gvec^{(3)}\end{array}\!\!\right]\in \R^{3N^2},
$$
where the vertical vector $\gvec^{(j)}\in\R^{N^2}$ is formed by stacking the columns of the matrix $G^{(j)}\in \R^{N\times N}$.

We record X-ray transmission data with three different energies, calling them low, middle and high energy. Measurements are resulting in three $M$-dimensional data vectors called $ \mvec^L$, $ \mvec^M$ and $\mvec^H$. For example, the low-energy measurement is given by
\begin{equation}\label{measmodel_low}
    \mvec^L = c_{11}\Radon^L\gvec^{(1)} + c_{12}\Radon^L\gvec^{(2)} + c_{13}\Radon^L\gvec^{(2)},
\end{equation}
and similar form applies to all the different energies.

The materials attenuate the X-rays with individual strengths described by the energy dependent attenuation constants $c_{11}>0$, $c_{12}>0$, $c_{13}>0$, ... and so on. The empirical values of $c_{ij}$ can be determined by measuring pure samples of each of three known materials. The $D{\times}N^2$ matrix $A^L$ encodes the geometry of the tomographic measurement using the standard pencil-beam model. Here $D$ is the number of sensors (pixels) in the linear X-ray detector. The geometric system matrices $\Radon^M$, $\Radon^H$ and $\Radon^L$ can be different from each other. We have kept the different letters for all the system matrices in the theory part, but in the actual calculations, we used in the same measurement geometry and system matrix for all the measurements: $\Radon^H=\Radon^M=\Radon^L$.

We can combine all the measurements in a unified linear system:
\begin{equation}\label{unifiedsystem}
    \mvec=\left[\!\!\begin{array}{l}\mvec^L\\\mvec^M\\\mvec^H\end{array}\!\!\right]
    =\begin{bmatrix}
c_{11}\Radon^L &c_{12}\Radon^L &c_{13}\Radon^L\\
c_{21}\Radon^M &c_{22}\Radon^M &c_{23}\Radon^M\\
c_{31}\Radon^H &c_{32}\Radon^H &c_{33}\Radon^H\\
\end{bmatrix}
\left[\!\!\begin{array}{l}\gvec^{(1)}\\\gvec^{(2)}\\\gvec^{(3)}\end{array}\!\!\right]=\A \gvec.
\end{equation}
The core idea in this triple-energy X-ray tomography for material decomposition is to choose the three energies so that three materials respond to them differently. The solution of (\ref{unifiedsystem}) is rather analogous to solving a system of three linear equations for three variables.

We use our novel variational regularization approach in the space $\gvec\in \R^{3N^2}$, including a non-negativity constraint:
\begin{equation}\label{variational_cont}
\widetilde{\gvec}_{\alpha,\beta} 
= \argmin_{\gvec^{(j)}\geq 0}
\left\{\|\mvec-\A \gvec\|_2^2 + \alpha\Regul(\gvec) + \beta\Sep(\gvec) \right\}, 
\end{equation}
where $\alpha,\beta>0$ are regularization parameters, $\gvec^{(j)}\geq 0$ means that the elements of the vector are non-negative numbers and the regularizer $\Regul$ can be any of the standard choices such as the Tikhonov penalty
\begin{equation}\label{tikhonov_penalty}
    \Regul(\gvec) = \|\gvec\|_2^2. 
\end{equation}
%
%
The novelty arises from the term that penalises all Inner Products (IP) between vectors $\gvec^{(1)}$, $\gvec^{(2)}$ and  $\gvec^{(3)} \in \R^{N^2}$. The regularization term $\Sep(g)$ is defined as double the sum of those:
\begin{equation}\label{materialsep_penalty}
\begin{split}
\Sep(\gvec) = \Sep(\left[\!\!\begin{array}{l}\gvec^{(1)}\\\gvec^{(2)}\\\gvec^{(3)}\end{array}\!\!\right]) := 2(\langle \gvec^{(1)},\gvec^{(2)}\rangle + \langle \gvec^{(1)},\gvec^{(3)}\rangle + \langle \gvec^{(2)}, \gvec^{(3)}\rangle)\\
= 2\sum_{i=1}^{N^2} \gvec^{(1)}_i \gvec^{(2)}_i + 2\sum_{i=1}^{N^2}  \gvec^{(1)}_i \gvec^{(3)}_i + 2\sum_{i=1}^{N^2}  \gvec^{(2)}_i \gvec^{(3)}_i.
\end{split}
\end{equation}
Together with the non-negativity constraint, $\Sep$ promotes the point-wise separation between three materials: {\it at each pixel, at least two of the images, $G^{(1)}$, $G^{(2)}$ or  $G^{(3)}$, needs to have a zero value to make $\Sep$ minimal.}

The quadratic program resulting from the application of the novel variational regularization is solved using an Interior Point Method (IPM) \cite{wright,gondzio_25}; we develop an efficient preconditioner for the normal equations which guarantees the spectrum of the preconditioned matrix to remain independent of the IPM iteration. 

\section{Preconditioned interior point method}\label{sec:IPM}

In this section we discuss optimization with preconditioned interior point method (IPM).

By combining the use of Tikhonov regularizer (\ref{tikhonov_penalty}) and the Inner Product regularizer (\ref{materialsep_penalty}), which promotes the point-wise separation of three materials, we arrive at the constrained quadratic programming task
\begin{equation}\label{variational_Tikhonov}
\argmin_{\gvec^{(j)}\geq 0}
\left\{\|\mvec-\A \gvec\|_2^2 + \alpha\|\gvec\|_2^2 + \beta\, \gvec^T L \gvec \right\},
\end{equation}
where 
$$
  L=
  \begin{bmatrix}
      0 & I & I \\
      I & 0 & I \\
      I & I & 0 \\
 \end{bmatrix},
$$

with nine blocks of size $N^2{\times}N^2$ each. In general, using the regularizer with $k$ materials is equivalent to adding a regularization term $\gvec^T L\gvec$, with
\[
L_k= E_k\otimes I_{N^2}-I_{kN^2}
\]
where $E_k$ is the matrix of all ones of size $k$, $E_k=[\mathbf e\, \mathbf e\, \dots \mathbf e]$

The problem may be written as an explicit quadratic program with inequality (non-negativity) constraints

\begin{equation}
    \argmin_{\gvec^{(j)}\geq 0} -\mvec^T \A \gvec +\frac{1}{2} \gvec^T(Q_1+Q_2)\gvec
    \label{minproblem}
\end{equation}








where


\begin{align}
\label{Qkronecker}
Q_1&=
\begin{bmatrix}
c_{11}^2 & c_{11}c_{12} & c_{11}c_{13}\\
c_{11}c_{12} & c^2_{12} & c_{11}c_{13}\\
c_{11}c_{13} & c_{12}c_{13} & c^2_{13}
\end{bmatrix}\otimes (A^L)^TA^L+\\
&+\begin{bmatrix}
c_{21}^2 & c_{21}c_{22} & c_{21}c_{23}\\
c_{21}c_{22} & c^2_{22} & c_{22}c_{23}\\
c_{21}c_{23} & c_{22}c_{23} & c^2_{23}
\end{bmatrix}\otimes (A^M)^TA^M+ \notag\\
&+\begin{bmatrix}
c_{31}^2 & c_{31}c_{32} & c_{31}c_{33}\\
c_{31}c_{32} & c^2_{32} & c_{32}c_{33}\\
c_{31}c_{33} & c_{32}c_{33} & c^2_{33}
\end{bmatrix}\otimes (A^H)^TA^H,\notag
\end{align}
\begin{equation}\label{Q_regularize}
Q_2 = \begin{bmatrix} 
\alpha & \beta & \beta\\ 
\beta & \alpha & \beta\\
\beta & \beta & \alpha\end{bmatrix}\otimes I_{N^2},
\end{equation}
where $\otimes$ represents the Kronecker product.

For an arbitrary number of materials $k$, $Q_1$ is still equal to $\mathcal A^T\mathcal A$ and $Q_2$ is equal to
\begin{equation}\label{Q2_k_materials}
Q_2 = \beta L_k + \alpha I_{kN^2} = 
\underbrace{\begin{bmatrix}
\alpha & \beta & \beta & \dots & \beta\\
\beta & \alpha & \beta & \dots & \beta\\
\beta & \beta & \alpha & \ddots & \vdots\\
\vdots & \vdots & \ddots & \ddots & \beta\\
\beta & \beta & \dots & \beta & \alpha
\end{bmatrix}}_{Q'_2\in\mathbb R^{k\times k}}
\otimes I_{N^2}
\end{equation}

Recall this important property about the Kronecker product:
\begin{lemma}
\label{kroneig}
Given two square matrices $T$ and $Z$, the eigenvalues of the Kronecker product $T \otimes Z$ are given by $t \cdot z$, where $t$ is an eigenvalue of $T$ and $z$ is an eigenvalue of $Z$.
\end{lemma}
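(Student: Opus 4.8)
The plan is to prove the eigenvalue formula for the Kronecker product $T\otimes Z$ by exhibiting the eigenvectors explicitly and verifying the claimed eigenvalue relation directly. First I would recall the two standard facts I intend to use: the mixed-product property $(T\otimes Z)(\vvec\otimes\wvec)=(T\vvec)\otimes(Z\wvec)$, which follows from the definition of the Kronecker product by block computation, and the fact that a tensor product $\vvec\otimes\wvec$ of nonzero vectors is itself nonzero. With these in hand the core of the argument is immediate: if $T\vvec=t\vvec$ and $Z\wvec=z\wvec$ with $\vvec,\wvec\neq 0$, then
\begin{equation}
(T\otimes Z)(\vvec\otimes\wvec)=(T\vvec)\otimes(Z\wvec)=(t\vvec)\otimes(z\wvec)=tz\,(\vvec\otimes\wvec),
\end{equation}
so $tz$ is an eigenvalue of $T\otimes Z$ with eigenvector $\vvec\otimes\wvec$. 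This shows every product $t\cdot z$ of an eigenvalue $t$ of $T$ with an eigenvalue $z$ of $Z$ is an eigenvalue of the Kronecker product.

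The remaining issue, and what I expect to be the main obstacle, is to argue that these products $t\cdot z$ are \emph{all} the eigenvalues, i.e.\ that we have accounted for the full spectrum with correct multiplicities rather than merely producing a subset of the eigenvalues. If $T$ and $Z$ were both diagonalizable this would be clean: picking bases of eigenvectors $\{\vvec_a\}$ for $T$ (sizes $p$) and $\{\wvec_b\}$ for $Z$ (size $q$), the family $\{\vvec_a\otimes\wvec_b\}$ consists of $pq$ vectors, and their linear independence (which follows from the linear independence of each basis together with properties of the tensor product) shows they form a complete eigenbasis of the $pq\times pq$ matrix $T\otimes Z$. Counting then forces the eigenvalues to be exactly the $pq$ products $t_a z_b$ with the expected multiplicities.

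To handle general (not necessarily diagonalizable) square matrices, I would instead pass to the Schur triangularization: write $T=U_1 R_1 U_1^{*}$ and $Z=U_2 R_2 U_2^{*}$ with $U_1,U_2$ unitary and $R_1,R_2$ upper triangular carrying the eigenvalues $t_a$ and $z_b$ on their diagonals. Using the mixed-product property twice gives
\begin{equation}
T\otimes Z=(U_1\otimes U_2)(R_1\otimes R_2)(U_1\otimes U_2)^{*},
\end{equation}
and since $U_1\otimes U_2$ is unitary this is a unitary similarity. A direct check shows $R_1\otimes R_2$ is again upper triangular (up to the fixed block ordering of the Kronecker product) with diagonal entries exactly the products $t_a z_b$. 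Because similar matrices share the same characteristic polynomial, the eigenvalues of $T\otimes Z$, with multiplicity, are precisely the $t_a z_b$. For the purposes of this paper only the forward direction is strictly needed, since $Q_1$, $Q_2$ and the matrices appearing in \eqref{Qkronecker}--\eqref{Q2_k_materials} are symmetric and hence diagonalizable, so the clean eigenbasis argument already yields the full spectrum; I would state the general version via Schur for completeness and note that the symmetric case suffices for our application.
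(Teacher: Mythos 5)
The paper states Lemma \ref{kroneig} without any proof --- it is explicitly \emph{recalled} as a standard property of the Kronecker product --- so there is no in-paper argument to compare yours against; you have effectively supplied the missing textbook proof, and it is correct. Your structure is the standard one: the mixed-product property shows that each product $t\cdot z$ is an eigenvalue, with eigenvector the tensor product of the individual eigenvectors, and you rightly identify that the substantive part is the converse --- that the products $t_a z_b$ exhaust the spectrum with the correct multiplicities --- which you settle via Schur triangularization of both factors and the fact that the Kronecker product of unitaries is unitary. One small simplification: the caveat ``up to the fixed block ordering'' is unnecessary, since $R_1\otimes R_2$ is genuinely upper triangular as written --- its $(a,b)$ block is $(R_1)_{ab}R_2$, so every block below the block diagonal vanishes and the diagonal blocks are upper triangular with diagonal entries $t_a z_b$. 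Your closing remark is also well judged for this paper: every application of Lemma \ref{kroneig} here (to $Q_2=Q_2'\otimes I_{N^2}$ in Lemma \ref{lemmaconvex}, and to $(\mathcal C^T\mathcal C)\otimes(A^TA)$ in the preconditioner analysis) involves symmetric factors, so the eigenbasis-counting argument for diagonalizable matrices already suffices; but note that these applications do need the full-spectrum statement (for instance, to conclude that $Q_2$ is positive semi-definite one must know that \emph{no other} eigenvalues exist), so the counting step --- or the Schur argument --- cannot be omitted in favor of the forward inclusion alone.
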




To study the convexity of problem \eqref{minproblem}, let us first analyze the eigenvalues of matrix $Q_2'$.
\begin{lemma}
\label{Q2eigenvalues}
When considering $k$ materials, matrix $Q_2'$ has eigenvalues $\alpha-\beta$ with multiplicity $k-1$ and $\alpha+(k-1)\beta$ with multiplicity $1$.
\end{lemma}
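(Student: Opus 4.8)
The plan is to recognize $Q_2'$ as a simple affine function of the all-ones matrix and then read off its spectrum directly. First I would rewrite the matrix from \eqref{Q2_k_materials} as
$$
Q_2' = (\alpha-\beta)\,I_k + \beta\,E_k,
$$
which holds entrywise: each diagonal entry is $\alpha = (\alpha-\beta)\cdot 1 + \beta\cdot 1$, and each off-diagonal entry is $\beta = (\alpha-\beta)\cdot 0 + \beta\cdot 1$, exactly matching $(\alpha-\beta)I_k + \beta E_k$. This reduces the problem to understanding the spectrum of $E_k$.

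The next step is to compute the eigenvalues of $E_k$, the $k\times k$ matrix of all ones. Writing $E_k = \mathbf{e}\,\mathbf{e}^T$ with $\mathbf{e}=(1,\dots,1)^T$, the matrix has rank one: its range is the line spanned by $\mathbf{e}$, on which $E_k\mathbf{e} = k\,\mathbf{e}$, so $k$ is an eigenvalue of multiplicity $1$; and its kernel is the hyperplane $\{v\in\R^k : \mathbf{e}^T v = 0\}$ of dimension $k-1$, giving the eigenvalue $0$ with multiplicity $k-1$. Since these two eigenspaces are orthogonal and together span $\R^k$, this accounts for the full spectrum of $E_k$.

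Finally I would invoke the elementary fact that an affine shift $M \mapsto aI + bM$ leaves eigenvectors unchanged and sends each eigenvalue $\lambda$ of $M$ to $a + b\lambda$. Applying this with $a = \alpha-\beta$, $b=\beta$, and $M=E_k$: the eigenvalue $k$ is transformed into $(\alpha-\beta)+\beta k = \alpha + (k-1)\beta$, inheriting multiplicity $1$, while the eigenvalue $0$ becomes $(\alpha-\beta)+\beta\cdot 0 = \alpha-\beta$, inheriting multiplicity $k-1$. This is precisely the asserted spectrum.

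The argument is entirely routine, and I expect no genuine obstacle; the only point deserving a word of care is the bookkeeping of multiplicities. If one prefers to avoid the shift principle in the abstract, the eigenvectors can be exhibited by hand: the vector $\mathbf{e}$ satisfies $Q_2'\mathbf{e} = (\alpha+(k-1)\beta)\,\mathbf{e}$ because every row of $Q_2'$ sums to $\alpha + (k-1)\beta$, and every $v$ with $\mathbf{e}^T v = 0$ satisfies $Q_2' v = (\alpha-\beta)\,v$ because $E_k v = 0$. Noting that the eigenspace for $\alpha-\beta$ is exactly the $(k-1)$-dimensional hyperplane $\mathbf{e}^{\perp}$ and the eigenspace for $\alpha+(k-1)\beta$ is the line through $\mathbf{e}$, and that these span $\R^k$, settles the multiplicities and completes the proof.
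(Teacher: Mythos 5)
Your proof is correct and rests on the same key observation as the paper's: your decomposition $Q_2' = (\alpha-\beta)I_k + \beta E_k$ is precisely the paper's remark that $Q_2' - (\alpha-\beta)I_k$ equals the rank-one all-$\beta$ matrix. The only difference is cosmetic --- the paper pins down the second eigenvalue by checking that $Q_2'-(\alpha+(k-1)\beta)I_k$ has rank $k-1$, whereas you derive it from $E_k\mathbf{e}=k\,\mathbf{e}$ and the affine spectral shift, with the small bonus that your version exhibits the eigenvectors explicitly.
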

\begin{proof}
The result follows from noticing that
\[Q_2'-(\alpha-\beta)I_k = \begin{bmatrix}
\beta & \dots & \beta\\
\vdots & \ddots & \vdots\\
\beta & \dots & \beta
\end{bmatrix}
\]
has rank $1$ and 
\[
Q_2'-(\alpha+(k-1)\beta)I_k = \begin{bmatrix}
-(k-1)\beta & \beta & \dots & \beta\\
\beta & -(k-1)\beta & \beta & \vdots\\
\vdots & & \ddots & \\
\beta & \dots & \beta & -(k-1)\beta
\end{bmatrix}
\]
has rank $k-1$.

\end{proof}

We can now give a sufficient condition for the convexity of problem \eqref{minproblem}.
\begin{lemma}
\label{lemmaconvex}
If $\alpha\ge\beta$, problem \eqref{minproblem} is convex, for any number of materials $k$.
\end{lemma}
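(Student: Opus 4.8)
The plan is to reduce convexity of \eqref{minproblem} to positive semidefiniteness of the Hessian of its objective. The feasible region $\{\gvec^{(j)}\ge 0\}$ is a product of nonnegative orthants and hence convex, so the whole problem is convex exactly when the objective $-\mvec^T\A\gvec+\tfrac12\gvec^T(Q_1+Q_2)\gvec$ is convex. The linear term $-\mvec^T\A\gvec$ does not affect second-order behaviour, so this amounts to showing that the Hessian $Q_1+Q_2$ is positive semidefinite. Since the sum of two positive semidefinite matrices is again positive semidefinite, it suffices to establish $Q_1\succeq 0$ and $Q_2\succeq 0$ separately.

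For $Q_1$ I would simply invoke the identity $Q_1=\A^T\A$ recorded in the text (valid for every number of materials $k$), which gives $\gvec^T Q_1\gvec=\|\A\gvec\|_2^2\ge 0$ for all $\gvec$, so $Q_1\succeq 0$ with no hypothesis on $\alpha$ or $\beta$. For the reader who prefers to see it from \eqref{Qkronecker}, each of the three summands is the Kronecker product of a rank-one outer product of the attenuation vector with a Gram matrix $(A^E)^T A^E$, both factors positive semidefinite, so each summand is positive semidefinite by Lemma \ref{kroneig}.

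For $Q_2$ I would combine Lemma \ref{kroneig} with Lemma \ref{Q2eigenvalues}, using the Kronecker structure $Q_2=Q_2'\otimes I_{N^2}$. All eigenvalues of $I_{N^2}$ equal $1$, so by Lemma \ref{kroneig} the spectrum of $Q_2$ coincides, up to multiplicity, with the spectrum of $Q_2'$, namely $\alpha-\beta$ (multiplicity $k-1$) and $\alpha+(k-1)\beta$ (multiplicity $1$). Because $\beta>0$, the second eigenvalue is strictly positive, while the first is nonnegative precisely when $\alpha\ge\beta$. Hence under the stated hypothesis every eigenvalue of $Q_2$ is nonnegative, i.e. $Q_2\succeq 0$.

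Putting the two bounds together yields $Q_1+Q_2\succeq 0$, so the objective is convex; together with the convex feasible set this makes \eqref{minproblem} a convex program, for any $k$. The substantive point is the $Q_2\succeq 0$ step, which is the only place the hypothesis $\alpha\ge\beta$ is used and the only step that genuinely relies on the two preceding lemmas; the reduction to the Hessian and the bound $Q_1\succeq 0$ are routine.
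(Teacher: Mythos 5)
Your proof is correct and follows essentially the same route as the paper: Lemma~\ref{Q2eigenvalues} gives nonnegativity of the spectrum of $Q_2'$ when $\alpha\ge\beta$, Lemma~\ref{kroneig} lifts this to $Q_2\succeq 0$, and combining with $Q_1=\A^T\A\succeq 0$ yields a positive semidefinite Hessian and hence a convex quadratic program. Your write-up is somewhat more careful than the paper's (it makes the reduction to the Hessian explicit and correctly treats the borderline case $\alpha=\beta$, where the eigenvalue $\alpha-\beta$ is zero rather than positive), but the substance is the same.
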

\begin{proof}
If $\alpha\ge\beta$, then matrix $Q_2'$ has only positive eigenvalues, due to Lemma~\ref{Q2eigenvalues}. Therefore, matrix $Q_2$ is positive semi-definite, due to Lemma~\ref{kroneig}. Since $\mathcal A^T\mathcal A$ is always positive semi-definite, $Q = \mathcal A^T\mathcal A+Q_2$ is positive semi-definite if $Q_2$ is. Therefore, the problem is a convex quadratic program.





\end{proof}
Therefore, in the following we assume that $\alpha\ge\beta$.

We now generalize the eigenvalue bounds for the preconditioned matrix found in \cite{gondzio2022material} to the case of $k$ materials. We recall that the matrix of the Newton system that arises at each interior point iterations can be written as $M=Q_1+Q_2+\Theta$, where $Q_1$ is given in \eqref{Qkronecker}, $Q_2$ is given in \eqref{Q_regularize} and $\Theta$ is a diagonal matrix with strictly positive entries. We restrict to the case of imaging using the same angles for all energies (i.e.\ $A^L=A^M=A^H=:A$); then, in the case of $k$ materials, $Q_1$ can be written as
\[
Q_1 = (\mathcal C^T\mathcal C) \otimes (A^TA),\qquad \mathcal C=\begin{bmatrix}
    c_{11} & c_{12} & \dots  & c_{1k}\\
    c_{21} & c_{22} & \dots  & c_{2k}\\
    \vdots &        & \ddots &       \\
    c_{k1} & c_{k2} & \dots  & c_{kk}
\end{bmatrix}.
\]
We suppose to precondition matrix $M$ with $P=(\mathcal C^T\mathcal C) \otimes(\rho I_{N^2})+Q_2+\Theta$, where $\rho\in\mathbb R$ is an approximation of the mean diagonal element of $A^TA$. Notice that this matrix has a $k\times k$ block structure with diagonal blocks and is thus easy to invert and apply to a vector. 

Then, the following eigenvalue bounds hold:
\begin{lemma}
    Given $M=(\mathcal C^T\mathcal C) \otimes (A^TA) + Q2'\otimes I_{N^2} + \Theta$ and $P=(\mathcal C^T\mathcal C) \otimes (\rho I_{N^2}) + Q2'\otimes I_{N^2} + \Theta$, the eigenvalues of $P^{-1}M$ are real and lie in the interval
    \[
    \Big[
    \frac{\alpha-\beta}{\rho\sigma_\text{max}^2(\mathcal C)+\alpha+(k-1)\beta},
    \frac{\sigma_\text{max}^2(A)\sigma_\text{max}^2(\mathcal C)+\alpha+(k-1)\beta}{\rho\sigma_\text{min}^2(\mathcal C)+\alpha-\beta}
    \Big].
    \]
\end{lemma}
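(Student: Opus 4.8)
The plan is to use the fact that $M$ and $P$ are both symmetric positive definite, so that $P^{-1}M$ is similar to the symmetric matrix $P^{-1/2}MP^{-1/2}$ and hence has real eigenvalues. Positive definiteness of $P$ follows from $\Theta\succ 0$ together with $Q_2'\otimes I_{N^2}\succeq(\alpha-\beta)I_{kN^2}\succeq 0$ (Lemmas~\ref{kroneig} and \ref{Q2eigenvalues}, using $\alpha\ge\beta$) and $(\mathcal C^T\mathcal C)\otimes(\rho I_{N^2})\succeq 0$; the same reasoning applies to $M$. Consequently every eigenvalue of $P^{-1}M$ lies between the minimum and the maximum of the generalized Rayleigh quotient $x^TMx/x^TPx$, and it suffices to bound this quotient from both sides.

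The structural observation driving the bounds is that $M$ and $P$ differ only in their first Kronecker factor. Writing the common part as $R=Q_2'\otimes I_{N^2}+\Theta$, we have $x^TMx=a+r$ and $x^TPx=p+r$, where $a=x^T[(\mathcal C^T\mathcal C)\otimes(A^TA)]x$, $p=x^T[(\mathcal C^T\mathcal C)\otimes(\rho I_{N^2})]x$ and $r=x^TRx$. By Lemma~\ref{kroneig} the eigenvalues of $(\mathcal C^T\mathcal C)\otimes(A^TA)$ are the products $\sigma_i^2(\mathcal C)\,\sigma_j^2(A)$, so $0\le a\le\sigma_{\max}^2(\mathcal C)\sigma_{\max}^2(A)\|x\|^2$, and likewise $\rho\sigma_{\min}^2(\mathcal C)\|x\|^2\le p\le\rho\sigma_{\max}^2(\mathcal C)\|x\|^2$. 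Splitting $r=r_Q+r_\Theta$ with $r_Q=x^T(Q_2'\otimes I_{N^2})x$ and $r_\Theta=x^T\Theta x$, Lemmas~\ref{kroneig} and \ref{Q2eigenvalues} give $(\alpha-\beta)\|x\|^2\le r_Q\le(\alpha+(k-1)\beta)\|x\|^2$, while $r_\Theta>0$.

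For the upper bound I would replace $a+r_Q$ by its maximum $C_1\|x\|^2$ with $C_1=\sigma_{\max}^2(A)\sigma_{\max}^2(\mathcal C)+\alpha+(k-1)\beta$ and $p+r_Q$ by its minimum $C_2\|x\|^2$ with $C_2=\rho\sigma_{\min}^2(\mathcal C)+\alpha-\beta$, reducing the quotient to $(C_1\|x\|^2+r_\Theta)/(C_2\|x\|^2+r_\Theta)$; the lower bound is obtained symmetrically, using $a\ge 0$ and the opposite extremes of $p$ and $r_Q$, giving endpoints $D_1=\alpha-\beta$ and $D_2=\rho\sigma_{\max}^2(\mathcal C)+\alpha+(k-1)\beta$. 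The crux — and the only genuinely delicate point — is that $r_\Theta$ appears identically in numerator and denominator, so the bound must be made independent of it. This is handled by the elementary monotonicity fact that for $s,t\ge 0$ and $C_1\ge C_2>0$ one has $\tfrac{C_1 s+t}{C_2 s+t}\le\tfrac{C_1}{C_2}$ (the ratio decreases from $C_1/C_2$ toward $1$ as $t$ grows), and dually $\tfrac{D_1 s+t}{D_2 s+t}\ge\tfrac{D_1}{D_2}$ when $D_1\le D_2$; applying these with $s=\|x\|^2$, $t=r_\Theta$ eliminates $r_\Theta$ and yields exactly the stated endpoints. The side conditions $C_1\ge C_2$ and $D_1\le D_2$ remain to be checked: they hold because $\rho$ is a mean diagonal entry of $A^TA$, whence $\rho\le\sigma_{\max}^2(A)$ and $\sigma_{\max}^2(A)\sigma_{\max}^2(\mathcal C)\ge\rho\sigma_{\min}^2(\mathcal C)$, combined with $k\beta\ge 0$.
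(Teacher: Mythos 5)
Your proof is correct and follows essentially the same route as the paper: expressing each eigenvalue of $P^{-1}M$ as a generalized Rayleigh quotient, bounding the three quadratic-form terms via the Kronecker eigenvalue lemmas, and eliminating the common $\Theta$ contribution by the monotonicity of $(Cs+t)/(C's+t)$ in $t$. The only difference is that you spell out the details the paper delegates to \cite{gondzio2022material} (the realness argument via $P^{-1/2}MP^{-1/2}$, the monotonicity step, and the side conditions $C_1\ge C_2$, $D_1\le D_2$), which is a faithful completion rather than a different approach.
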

\begin{proof}
    For a certain eigenvector $\vvec$ of unit norm, the eigenvalues of $P^{-1}M$ can be written as
    \[
    \lambda = \frac{\vvec^TM\vvec}{\vvec^TP\vvec} = \frac{\vvec^T\big((\mathcal C^T\mathcal C) \otimes (A^TA)\big)\vvec + \vvec^T\big(Q_2'\otimes I_{N^2}\big)\vvec + \vvec^T\Theta\vvec}{\vvec^T\big((\mathcal C^T\mathcal C) \otimes (\rho I_{N^2})\big)\vvec + \vvec^T\big(Q_2'\otimes I_{N^2}\big)\vvec + \vvec^T\Theta\vvec}.
    \]
Using basic properties of the Kronecker product (see \cite{gondzio2022material} for more details), notice that
\[
\vvec^T\big((\mathcal C^T\mathcal C) \otimes (A^TA)\big)\vvec \,\in\, \big[0,\sigma_\text{max}^2(A)\sigma_\text{max}^2(\mathcal C)\big],
\]
\[
\vvec^T\big((\mathcal C^T\mathcal C) \otimes (\rho I_{N^2})\big)\vvec \,\in\, \big[\rho\sigma_\text{min}^2(\mathcal C),\rho\sigma_\text{max}^2(\mathcal C)\big],
\]
\[
\vvec^T\big(Q_2'\otimes I_{N^2}\big)\vvec \,\in\, \big[\alpha-\beta,\alpha+(k-1)\beta\big].
\]
With similar arguments as in \cite{gondzio2022material} the thesis can be easily derived.
\end{proof}
Similar bounds can be found in the case $A^L \ne A^M \ne A^H$. Notice that when $k$ becomes larger, the effect of the inner product regularization coefficient $\beta$ on the eigenvalues of $P^{-1}M$ becomes more significant with respect to $\alpha$. In particular, if $k$ grows, for the same values of $\alpha$ and $\beta$, the spectrum of $P^{-1}M$ widens, worsening the efficacy of the preconditioner. A possible solution to this issue is to increase the gap between $\alpha$ and $\beta$ when the number of materials grows.

\section{Experimental data}

We use monochromatic X-ray data described in \cite{honkanen2023monochromatic}. There, three-dimensional distributions of selenium in different oxidation states is mapped and distinguished from background and each other with absorption edge contrast tomography.

 \subsection{The phantom}\label{sec:phantom}
 
In \cite{honkanen2023monochromatic}, three holes were drilled into a cuboid acrylic glass (PMMA) phantom. The holes where filled with selenium compounds: elemental Se, Na$_2$SeO$_3$ and Na$_2$SeO$_4$ and starch. See Figure \ref{fig:XCAS_phantom} for a schematic drawing of the phantom, where we show the materials with different colors.  Figure \ref{fig:selenium_phantom_photo} shows a photograph of the sample. Note how there seems to be no visible difference between the measured materials: they all have same color and structure.
  
  \begin{figure}[H]
      \centering
      \includegraphics[scale=0.4]{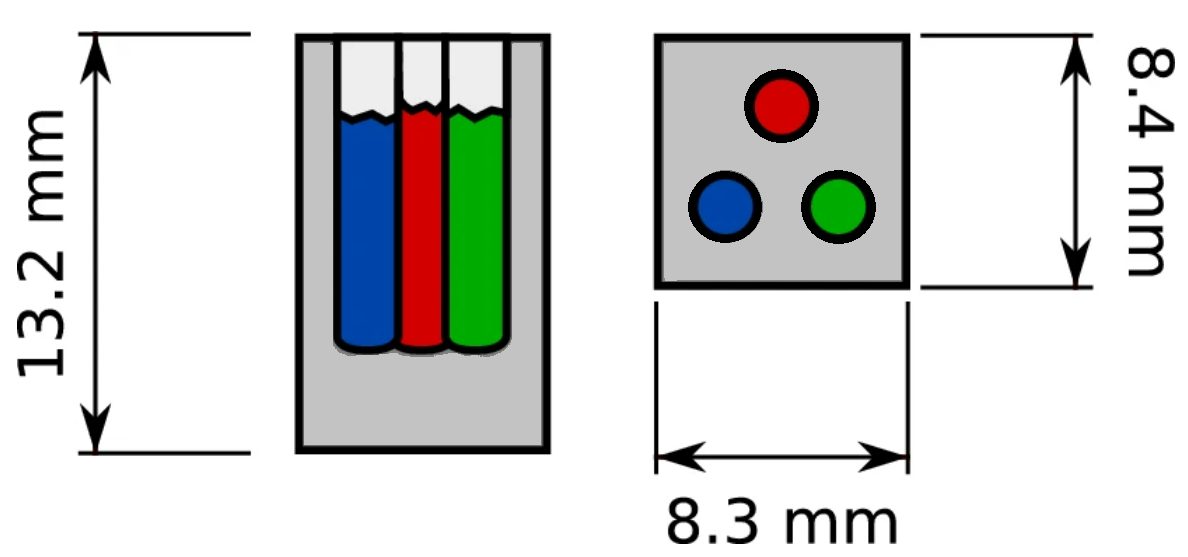}
      \caption{The figure shows a schematic drawing of the phantom. The cuboid PMMA (grey) was drilled with three holes each of which was filled with a mixture of a selenium compound [elemental Se (red), Na$_2$SeO$_3$ (green) and  Na$_2$SeO$_4$ (blue)] and starch. The holes were capped with tissue paper (very light grey).}
      \label{fig:XCAS_phantom}
  \end{figure}

\begin{figure}[H]
    \centering
    \includegraphics[scale=0.5]{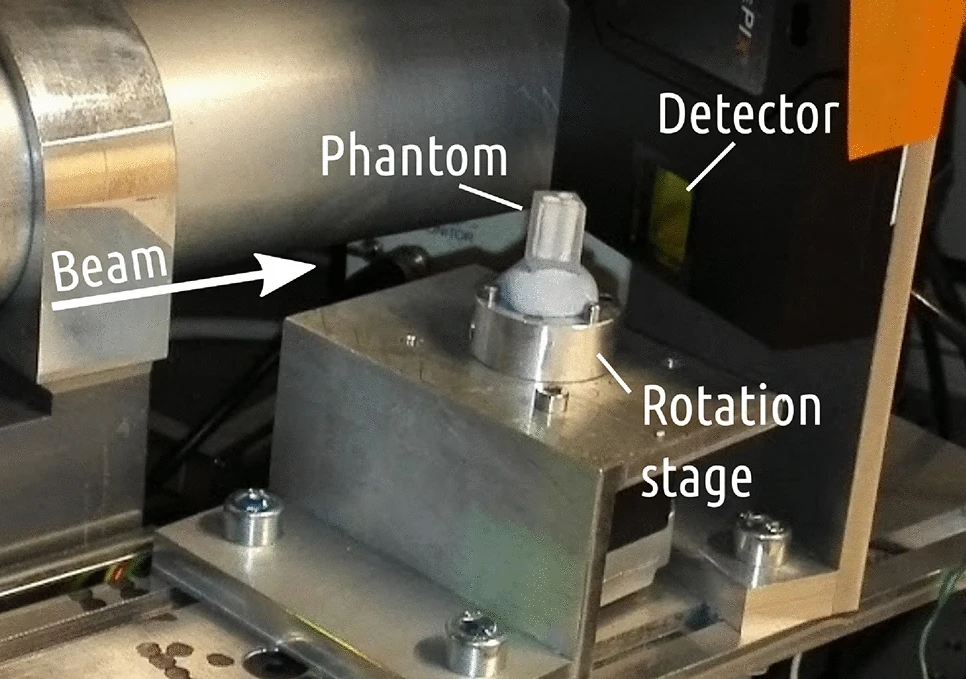}
    \caption{Here is a photograph of the tomography setup. The detector in the picture is a different TimePIX-based model than the one used in this work. There is no  visible difference between  the materials.}
    \label{fig:selenium_phantom_photo}
\end{figure}

\subsection{X-ray measurement}



The phantom described in Section \ref{sec:phantom} was measured with a XAS-CT setup, see Figure  \ref{fig:XCAS_equipment}. This kind of arrangement monochromatises the polychromatic X-rays from the X-ray tube when they travel through a spherically bent crystal analyser.
 \begin{figure}[H]
     \centering
     \includegraphics[scale=0.4]{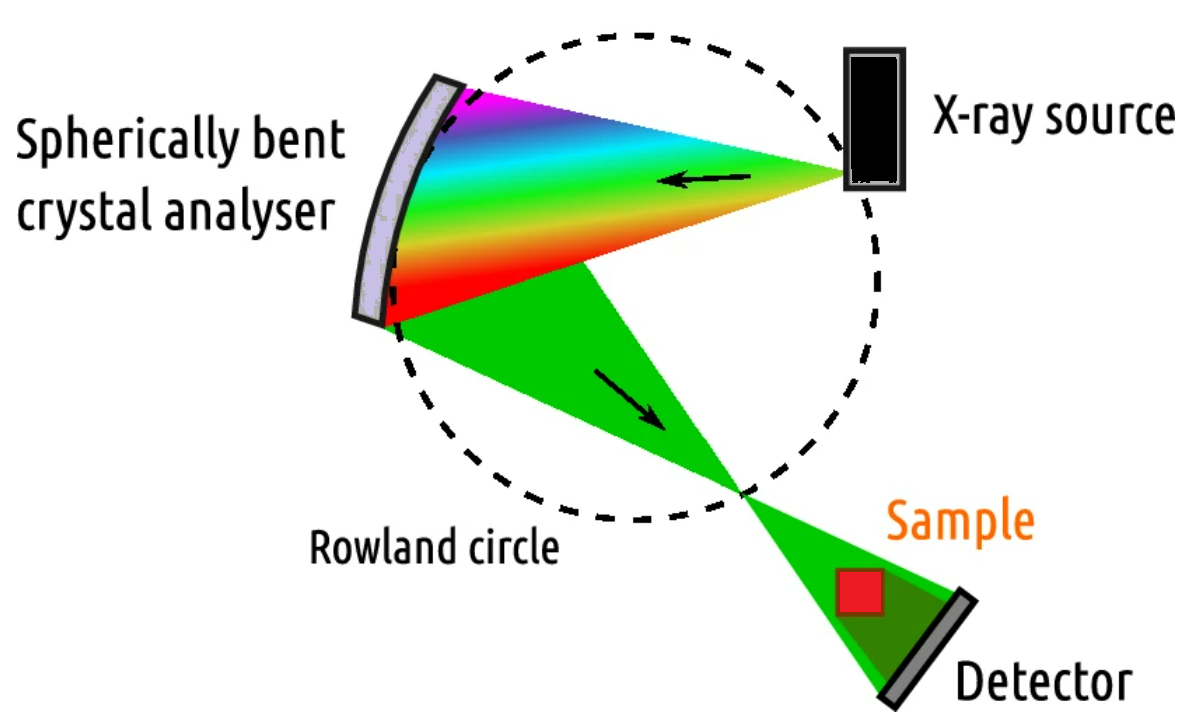}
     \caption{The schematic drawing of the XAS-CT setup. The polychromatic X-rays produced by the X-ray tube are monochromatised with the spherically bent crystal analyser. The sample to be imaged is illuminated by the monochromatised beam by moving it away from the Rowland circle so that the defocused beam completely covers it. The beam transmitted through the sample is recorded with a position-sensitive detector.\cite{honkanen2023monochromatic}}
     \label{fig:XCAS_equipment}
 \end{figure}

  



  \begin{figure}[H]
      \centering
      \includegraphics[scale=0.8]{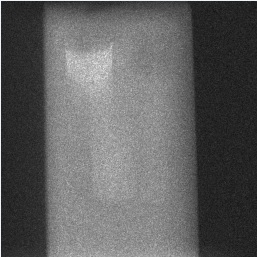}
      \includegraphics[scale=0.8]{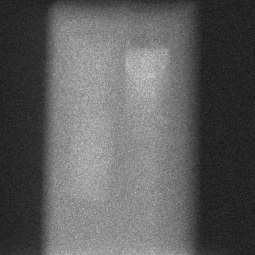}
      \caption{Two random example projection images of the sample. These projection images were taken with energies 12.658 keV and 12.662 keV.}
      \label{fig: projection_images}
  \end{figure}

  \begin{figure}[H]
      \centering
      \includegraphics[scale=0.3]{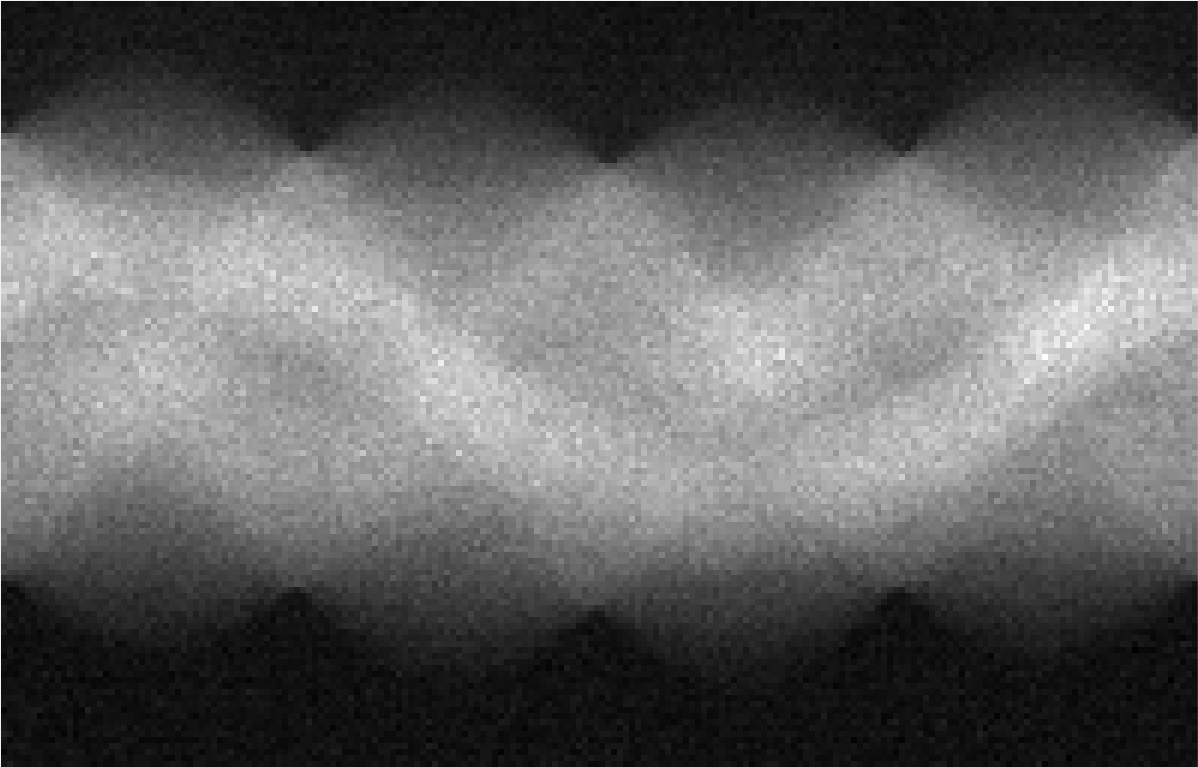}
      \caption{Example sinogram of the data. The row 80 (81 in Matlab syntax) was picked from all the projection images to the sinograms. There was all in all three different sinograms, corresponding the energies 12.658 keV, 12.662 keV and 12.685 keV. These energies correspond the low, middle and high energies in our measurement model. This sinogram is produced from 12.662 keV projection images.}
      \label{fig:example_sinogram}
  \end{figure}

\subsection{K-edges in X-ray attenuation}

Our IP method works on its best when we investigate materials with X-ray energies which are near the K-edges of the materials. In X-ray absorption spectroscopy, the K-edge is a sudden increase in X-ray absorption, occurring when the energy of the X-rays is just above the binding energy of the atoms' electron shell. The term is based on the International Union of Pure and Applied Chemistry (IUPAC) X-ray notation, where the atomic orbitals are labelled with letters K, L, M, N... and so on. The innermost electron shell is known as the K-shell. 

Physically, the sudden increase in attenuation is caused by the photoelectric absorption of photons. For this interaction to occur, the photons must have more energy than the binding energy of the K-shell electrons (K-edge).  A photon having an energy just above the binding energy of the electron is more likely to be absorbed than a photon having an energy just below the binding energy or significantly above it. 

Figure \ref{fig:Se_att_coeffs} illustrates this. In the figure, we have the different attenuation values $\frac{\mu}{\rho}$ as a function of photon energies in keV:s. The plots show how the attenuation varies with three different materials, which are pure selenium Se (blue), Na$_2$SeO$_3$ (red) and PMMA (green). We can see the sudden increase in attenuation of the selenium containing samples at the measured X-ray energy range (black box). We can compare with the PMMAs attenuation values, which are very smooth at that energy range. Clear k-edges at the measured energy range make the differentiation of materials much easier for the used material decomposition. The attenuation values in figure \ref{fig:Se_att_coeffs} are based on the attenuation value tables on the NIST (National Institute of Standards and Technology) X-ray attenuation online simulator \cite{saloman1988x}.  

\begin{figure}[H]
    \centering
    \includegraphics[scale = 0.5]{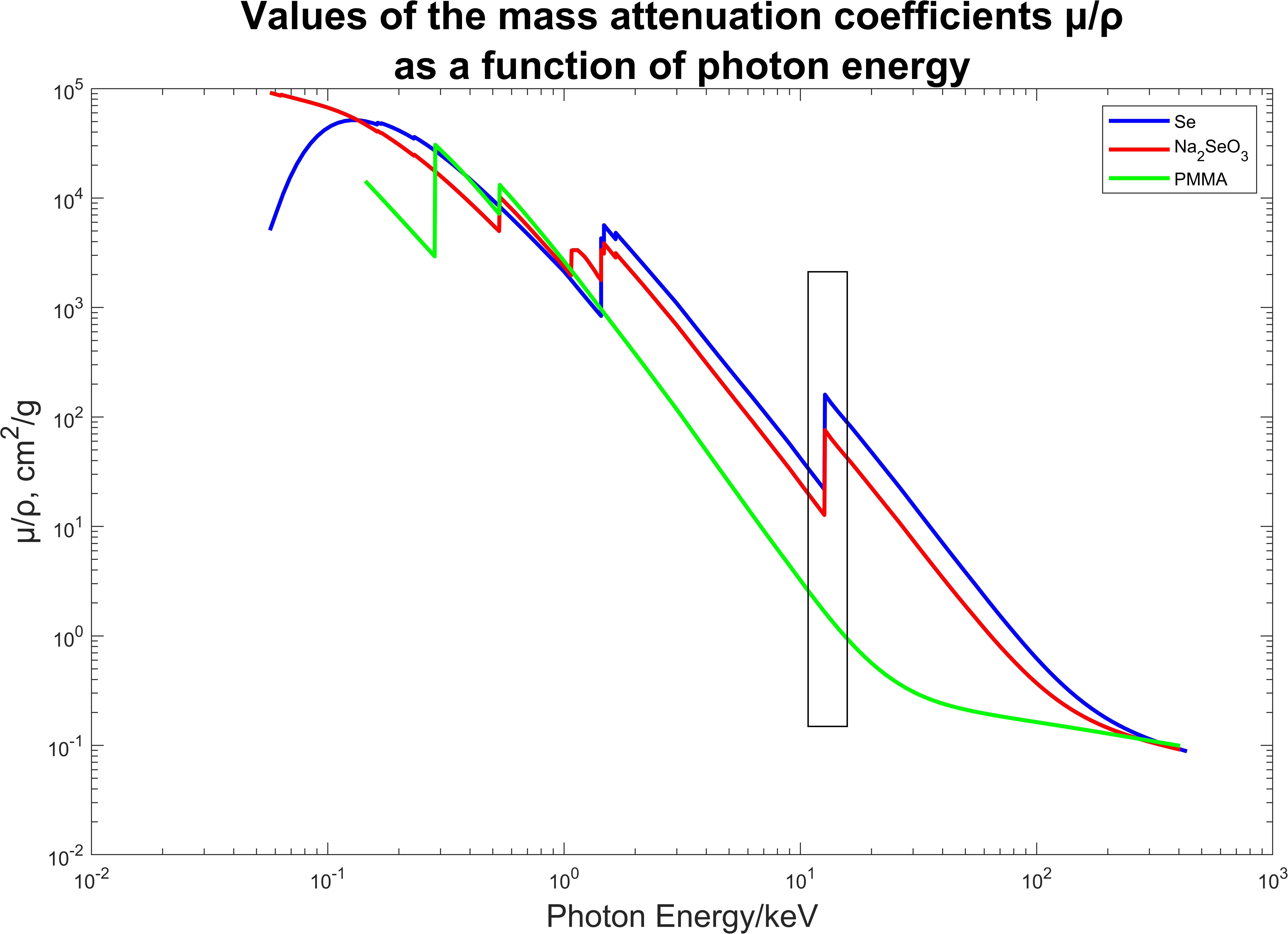}
    \caption{The mass attenuation coefficient, $\frac{\mu}{\rho}$, as a function of photon energy, for elemental Selenium Se, Na$_2$O$_3$ and PMMA. In the plot, the K-edges of the Se and Na$_2$O$_3$ are clearly visible, which makes the decomposing of the materials easier with the IP method. In the energy window that we use (12,54-12,80 keV), which is marked with a black box in the figure, the attenuation values of PMMA are smooth, which means that it appears similar in every energy images. So we can subtract it out from all the different energy sinograms, by subtracting low energy sinogram from all the others.}
    \label{fig:Se_att_coeffs}
\end{figure}

\section{Results}

In this section we show the results with the simulations and finally with real data. The section also includes a brief demonstration of the effect of the new regularization term. We have measured the quality of the simulations with some error measures, which will be explained below.

\subsection{Error measures of the simulations}

We were avoiding the inverse crime in our simulations by adding some noise and modelling error into our data. The noise level in the simulations was selected to be one percent, and we added modelling error by rotating the phantom 45 degrees and interpolating the difference. 

We calculated quality measures for the simulations by comparing the original phantoms with the resulting reconstructions. The classical errors were calculated correspondingly than in \cite{gondzio2022material}. We defined the $L_2$-error 
\begin{equation}
L_2\text{-error} = \frac{\text{norm}(\text{phantom}(:)-\text{reconstruction}(:))}{\text{norm(phantom(:))}}.
\label{L2}
\end{equation}
In addition to that we calculated the structural similarity index (SSIM) \cite{wang2004image} and Haar wavelet-based perceptual similarity index (HaarPSI) \cite{reisenhofer2018haar}. We have written down the mean values of the errors for all the materials, by summing up the numerical errors of the different materials and dividing by three. The same holds for the mean pixel error, which we have then listed as a percentage of correctly classified pixels (Corr.pix \%) in table \ref{tab: beta_table}.

\subsection{Effect of the new regularization term}

In this section we show how regularization effects to our reconstruction results with IP method, when using simulations. We use digital phantoms, because the point is just to demonstrate the effect of the inner product regularization parameter.

The coefficients which we used for the simulated materials with different energies are listed in the table \ref{tab:simulation_att_coefs}. 

\begin{table}[H]
    \centering
    \begin{tabular}{|c|c|c|c|}
    \hline
        Energy & M1 & M2 & M3 \\ \hline
        E1 & 22.73 & 8.56 & 3.51\\
        E2 & 5.95 & 12.32 & 10.88\\
        E3 & 7.81 & 3.51 & 27.77\\
        \hline
    \end{tabular}
    \caption{Attenuation coefficients $c_{11}$, $c_{12}$, ... of the demonstration materials, with increasing photon energies E1, E2 and E3.}
    \label{tab:simulation_att_coefs}
\end{table}

We chose resolution $N=128$ for the reconstructions. The number of used projection images was sparse in these simulations: 15 equispaced angles.

After choosing the most optimal $\alpha$ value, (here $\alpha$ = 20000) we start to increase the $\beta$ parameter from zero, to see how it affects the result. See figure \ref{fig:beta_demo}. The errors of the reconstructions, compared with the ground truth are listed in the table \ref{tab: beta_table}. We can see from the results that increasing the $\beta$ term reduces the error and improves the quality of the reconstructions. The range of $\beta$ is limited to be smaller than the $\alpha$ parameter to ensure that the problem stays convex. 

\begin{table}[H]
    \centering
    \begin{tabular}{|c|c|c|c|c|c|c|}
    \hline
         Mean errors: & L2   & SSIM & haarPSI & Corr.pix \%& $\alpha$ & $\beta$  \\ \hline
         First row:   & 0.60  & 0.25     & 0.14  &96   & 20000 & 0 \\
         Second row:  & 0.56  & 0.26     & 0.17  &96.4 & 20000 & 8000 \\
         Third row:   & 0.55  & 0.39     & 0.19  &96.3 & 20000 & 16000\\
         Fourth row:  & 0.59  & 0.41     & 0.21  &95.6& 20000 & 19500\\
         Ground truth:& 0     & 1        & 1     &100 &     &  \\\hline
    \end{tabular}
    \caption{Error table for different $\beta$ values. We can see that the $L_2$-errors\ref{L2} are getting smaller, structural similarity is increasing and also the HaarPSI index is improving, when we increase the $\beta$ regularization. We can not increase $\beta$ more than this, because the problem might turn to unstable. Unfortunately the $\beta$ regularization does not reduce the random noise spots around the actual target, which is natural consequence if one thinks how the term works. It penalises only the overlapping materials in different images and random errors in different images rarely overlap.}
    \label{tab: beta_table}
\end{table}

\begin{figure}[H]
    \centering
    \includegraphics[scale=0.8]{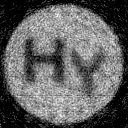}
        \includegraphics[scale=0.8]{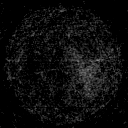}
            \includegraphics[scale=0.8]{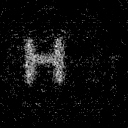}
              \includegraphics[scale=0.8]{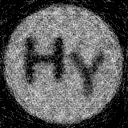}
        \includegraphics[scale=0.8]{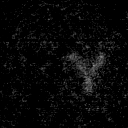}
            \includegraphics[scale=0.8]{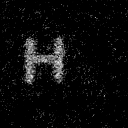}
               \includegraphics[scale=0.8]{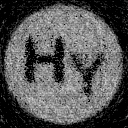}
        \includegraphics[scale=0.8]{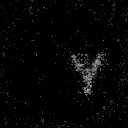}
            \includegraphics[scale=0.8]{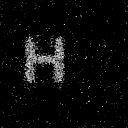}
                       \includegraphics[scale=0.8]{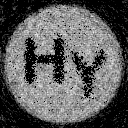}
        \includegraphics[scale=0.8]{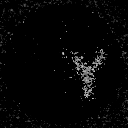}
            \includegraphics[scale=0.8]{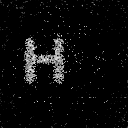}
                \includegraphics[scale=0.8]{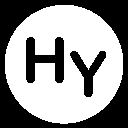}
        \includegraphics[scale=0.8]{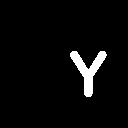}
            \includegraphics[scale=1.06]{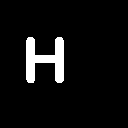}
    \caption{Beta demonstration. First row: $\beta$=0, second row: $\beta$=8000, third row: $\beta$=16000, fourth row: $\beta$ = 19500. We may notice from the figures that when we increase the $\beta$ regularization, the materials sharpen their shape in their own images, even though the effect is not very strong.}
    \label{fig:beta_demo}
\end{figure}

\begin{figure}[H]
    \centering
        \includegraphics[scale=0.6]{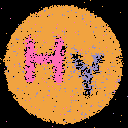}
        \includegraphics[scale=0.6]{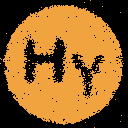}
        \includegraphics[scale=0.6]{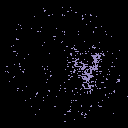}
        \includegraphics[scale=0.6]{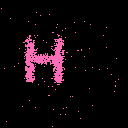}
        \includegraphics[scale=0.6]{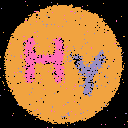}
        \includegraphics[scale=0.6]{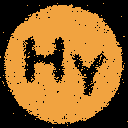}
        \includegraphics[scale=0.6]{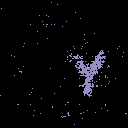}
        \includegraphics[scale=0.6]{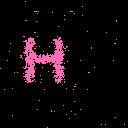}
        \includegraphics[scale=0.6]{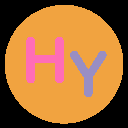}
        \includegraphics[scale=0.6]{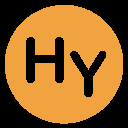}
        \includegraphics[scale=0.6]{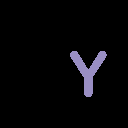}
        \includegraphics[scale=0.6]{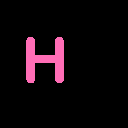}
    \caption{Beta demonstration. First row: $\beta$=0, second row: $\beta$=8000, third row: Ground truth. We may notice from the figures that when we increase the $\beta$ regularization, the materials sharpen their shape in their own images, even though the effect is not very strong.}
    \label{fig:beta_demo}
\end{figure}

\subsection{Results with simulated data}\label{sec:simulresults}

We have simulated the real data case virtually by using the measured attenuation values of the selenium compounds \ref{tab:att_coef}. We have the error measures separately for each material in table \ref{tab:simulation_errors}.

\begin{figure}[H]
    \centering
    \includegraphics[scale=0.86]{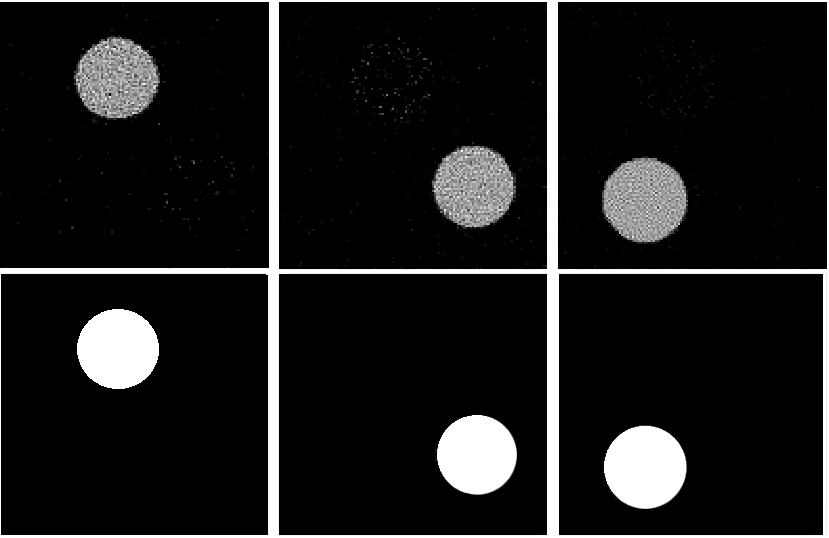}
    \caption{Results with simulated data. The first row is the simulated result for the different selenium samples. Here we have totally ignored the plastic container, which was present in the practical real world measurements. The second row shows the perfectly separated materials, which we have used as ground truths in the error calculations.}
    \label{fig:simulted_results}
\end{figure}

\begin{table}[H]
\caption{In this table, we have error calculations for the simulated phantom. We are aiming for as small as possible $L_2$-error\ref{L2}, but for SSIM and HPSI, the closer the value gets to one, the better, because it means that the images are more structurally similar or visually alike, which is the goal, when comparing the simulated results with the actual ground truth.}
\label{tab:error_table}

\centering
\begin{tabular}{|c|c|c|c|c|c|}
\hline
Material & Method & \phantom{mm}L2\phantom{mm}   & SSIM & HPSI\\
\hline
Se\hspace{0.5cm} & IP & 0.27 & 0.87 & 0.42\\
SeO$_3$ & IP          & 0.25 & 0.81 & 0.47\\
SeO$_4$ & IP          & 0.20 & 0.89 & 0.54\\
\hline
\end{tabular}
\label{tab:simulation_errors}
\end{table}

\subsection{Results with experimental X-ray data}

In this section, we show the reconstruction results with the standard filtered back projection (FBP) method and with the IP-method, see figure \ref{fig:real_data_results}. 

The number of used projection images is 200 with the real data.
We used the X-ray attenuation coefficients determined in \cite{honkanen2023monochromatic} in our system matrix, in place of the $c_{ij}$ coefficients. The X-ray attenuation coefficients used with different X-ray energies are listed in Table \ref{tab:att_coef}.

  \begin{table}[H]
      \centering
      \begin{tabular}{|c|c|c|c|}
      \hline
           Energy [keV] & Se     & SeO$_3$ & SeO$_4$ \\ \hline 
           12.658 & 8.4734 & 2.4380  & 3.5091\\
           12.662 & 7.9902 & 11.859 & 10.885\\
           12.685 & 7.4175 & 8.0033  & 27.777\\
           \hline
      \end{tabular}
      \caption{The X-ray attenuation coefficients for the different materials at different energies. The coefficients were determined in the article of Honkanen \& al.\cite{honkanen2023monochromatic}.}
      \label{tab:att_coef}
  \end{table}

Our simulation studies reported in Section \ref{sec:simulresults} suggest that $\alpha=20000$ and $\beta=19500$ are suitable choices for the regularization parameters. 


The reconstruction images in figure \ref{fig:real_data_results} have been black point and white point corrected for better visual impression. In these reconstructions, we have used all the projection images taken, so we have used data from 200 different projection angles. The first row of the figure shows FBP reconstructions made from the sinograms, which were measured with energies 12,658 keV, 12,662 keV and 12,685 keV. The end result is quite noisy and there are only small differences between the reconstructions. We have the PMMA phantom as a fourth material on the first row (and there was also some potassium as fulfilling material). The attenuation coefficients of these "extra" materials are quite same with every measured energy (they have no K-edges, see figure \ref{fig:Se_att_coeffs}). This for, we subtracted one low energy sinogram 12,645 keV (which was below all the K-edges) from the other sinograms before reconstruction. The result can be seen on the second and third row in the figure, where we have got rid of the phantom matrix. We have the FBP reconstructions on the second row and the IP reconstructions of the different materials on the third row.

\begin{figure}[H]
\begin{picture}(320,340) 
\put(0,380){FBP with energy 1}
\put(0,115){Material 1}
\put(0,265){\includegraphics[scale=0.4]{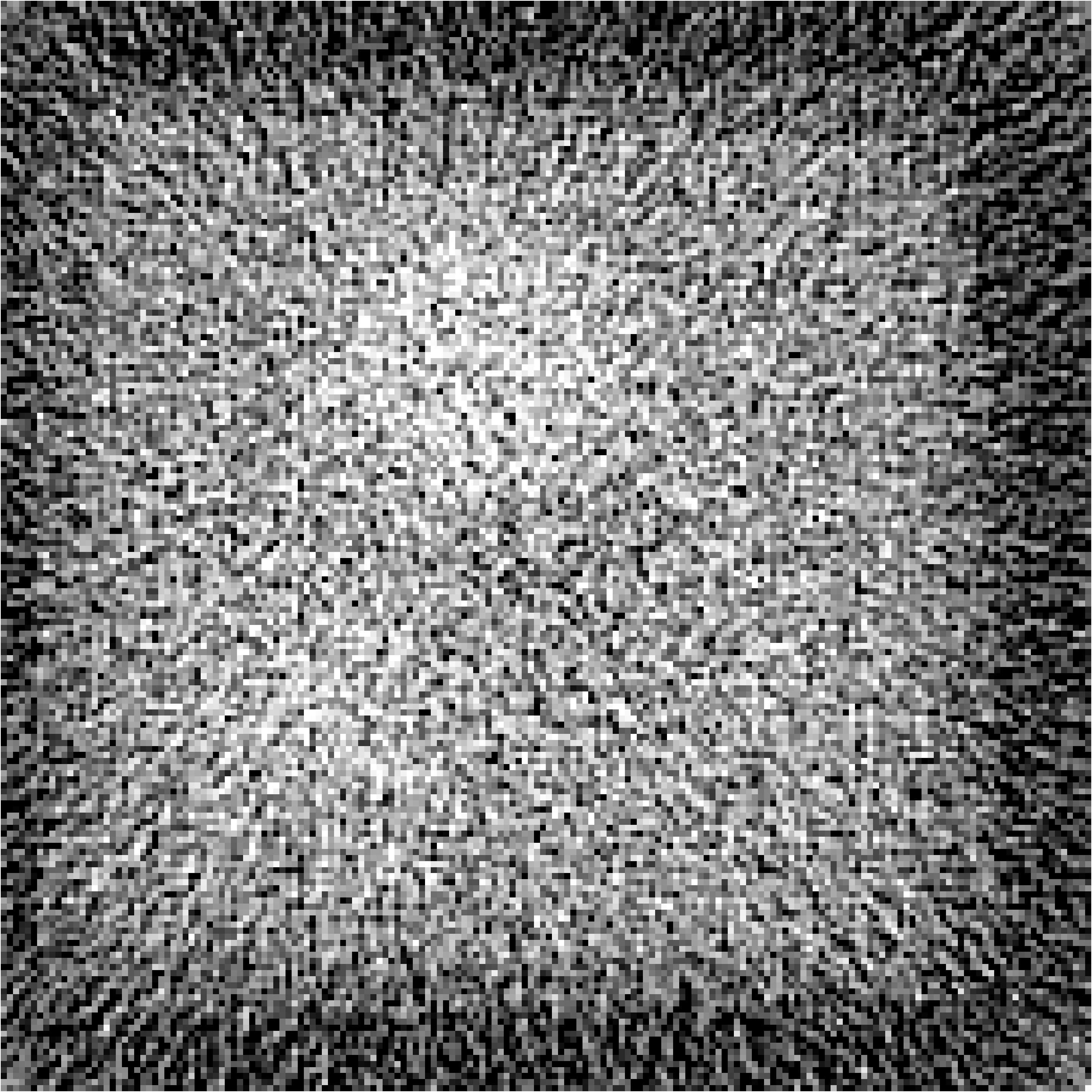}}
\put(0,155){\includegraphics[scale=0.4]{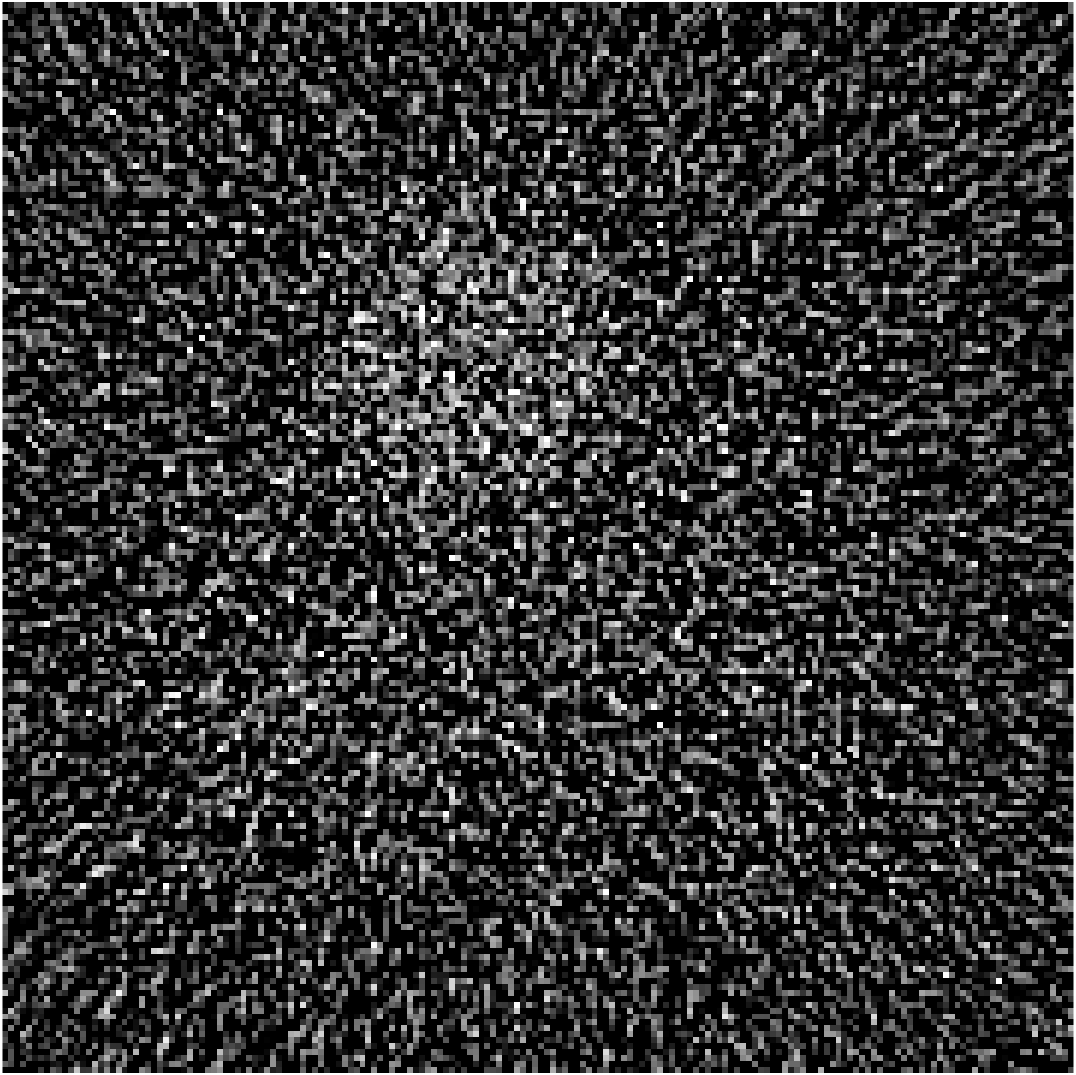}}
\put(0,0){\includegraphics[scale=0.4]{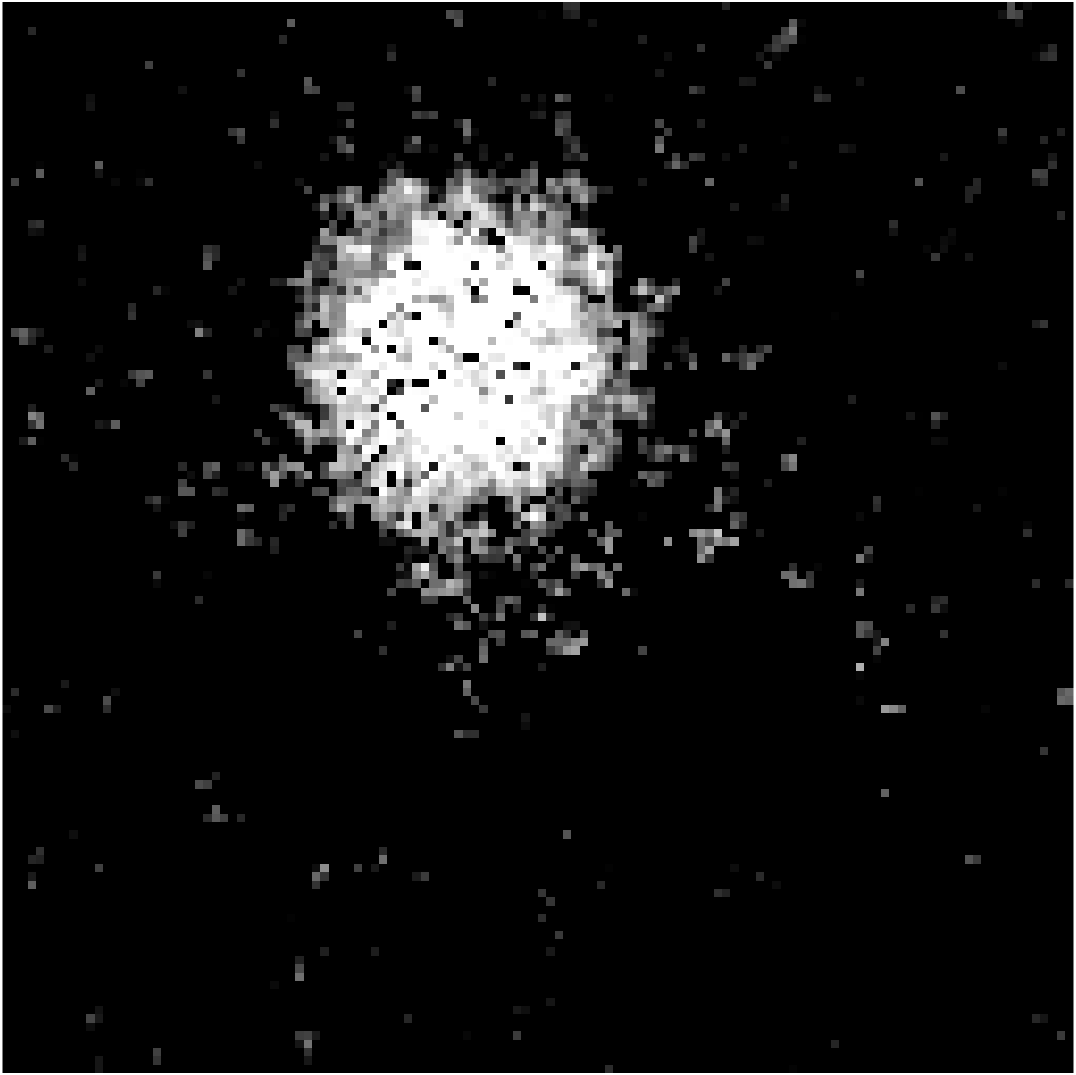}}
\put(110,380){FBP with energy 2}
\put(110,115){Material 2}
\put(110,265){\includegraphics[scale=0.4]{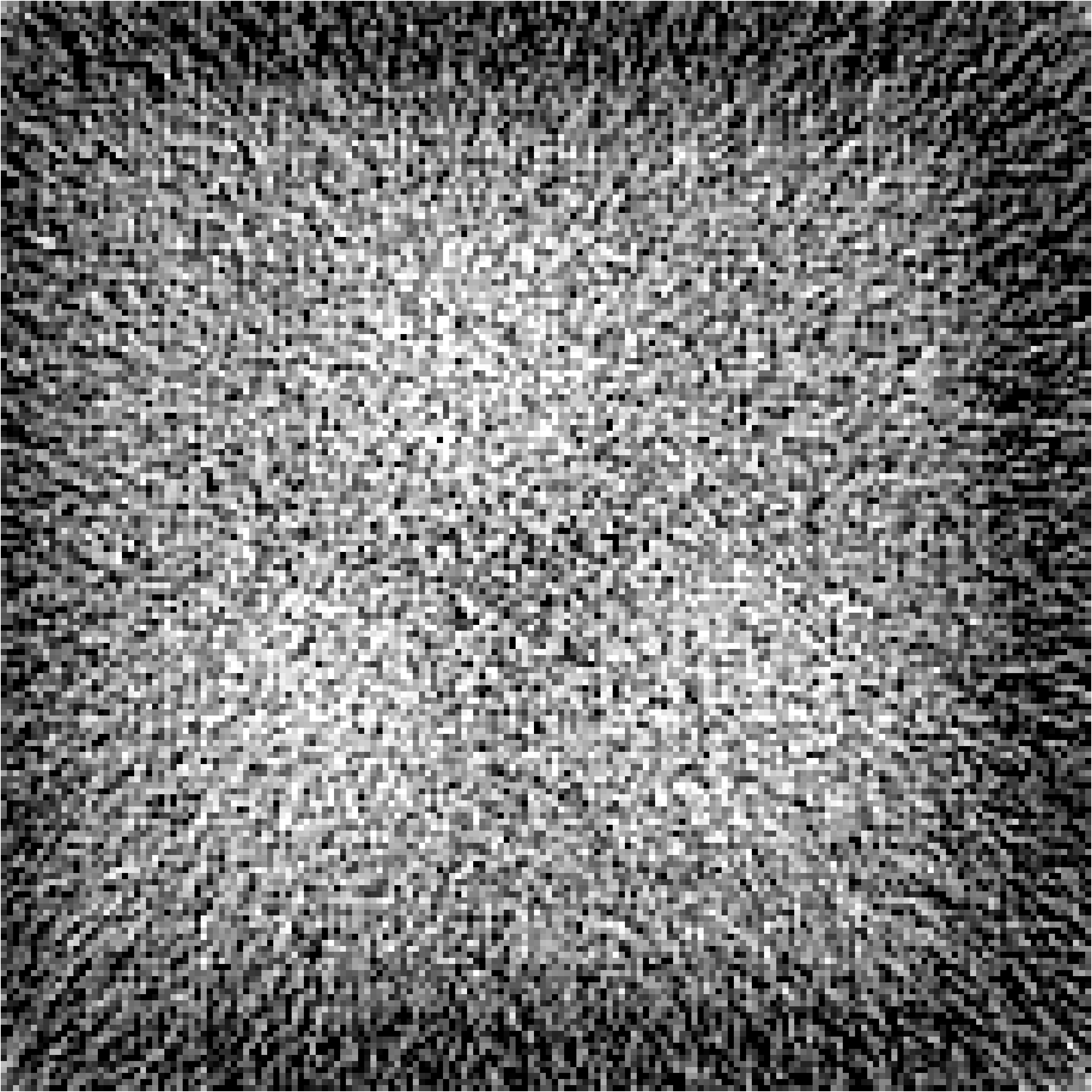}}
\put(110,155){\includegraphics[scale=0.4]{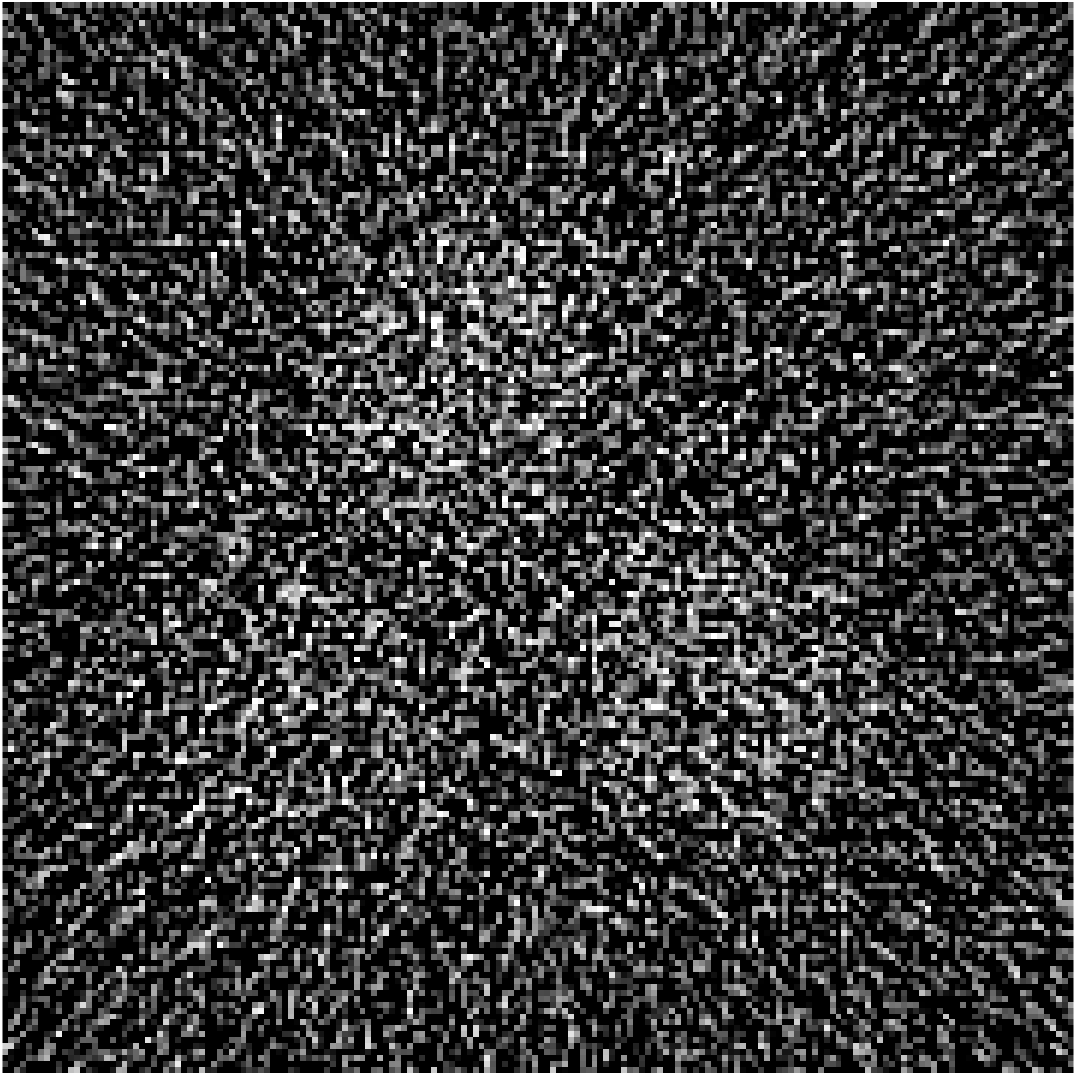}}
\put(110,0){\includegraphics[scale=0.4]{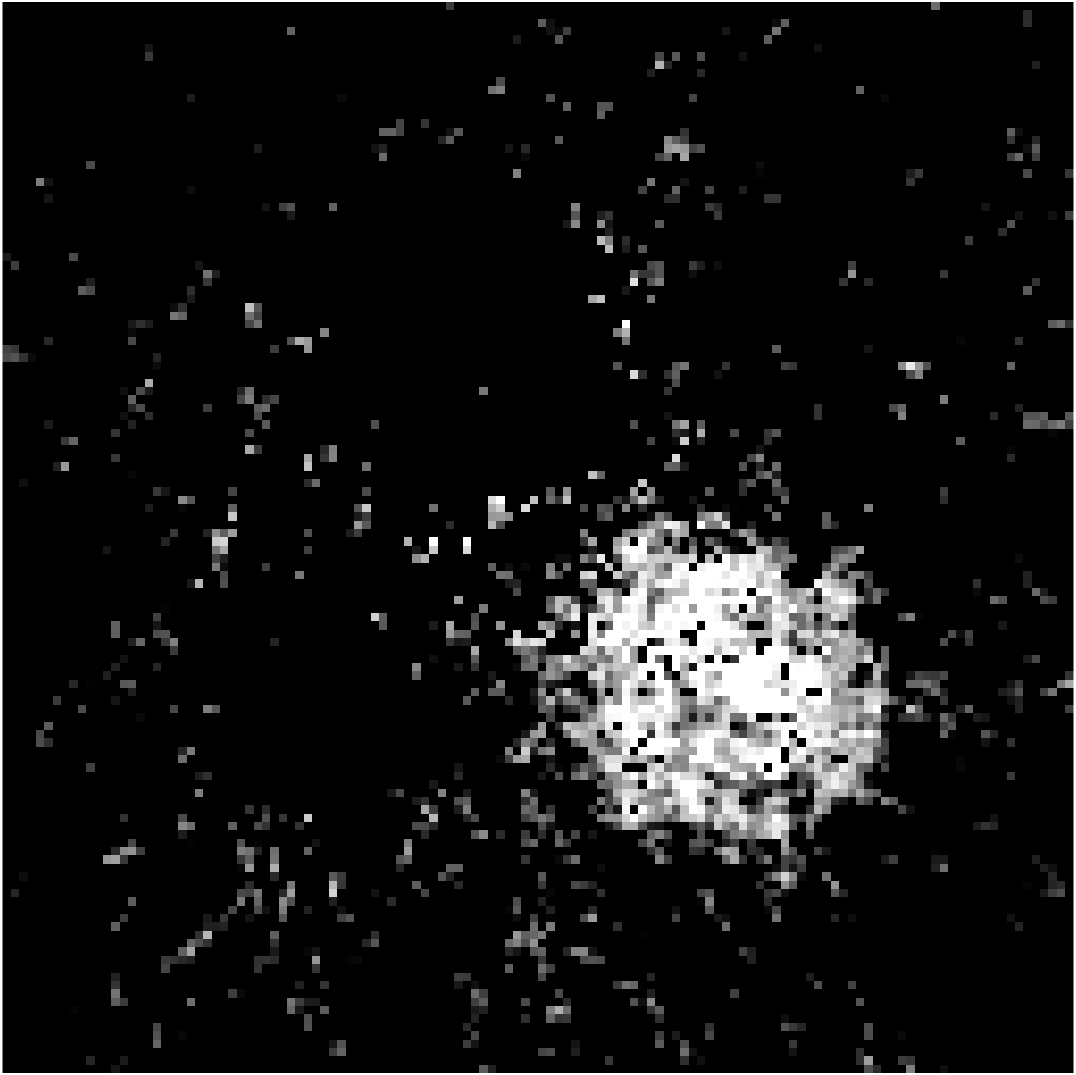}}
\put(220,380){FBP with energy 3}
\put(220,115){Material 3}
\put(220,265){\includegraphics[scale=0.4]{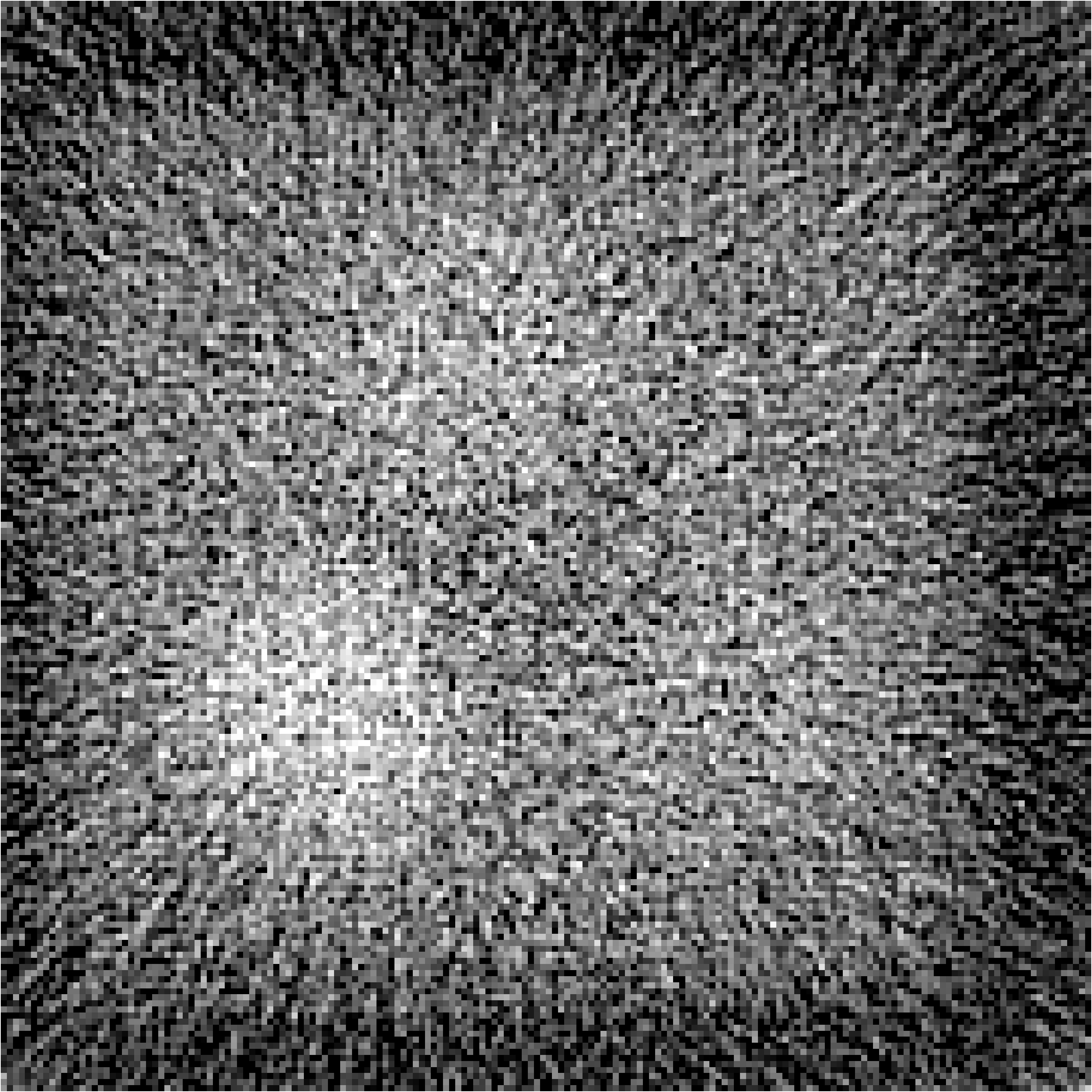}}
\put(220,155){\includegraphics[scale=0.4]{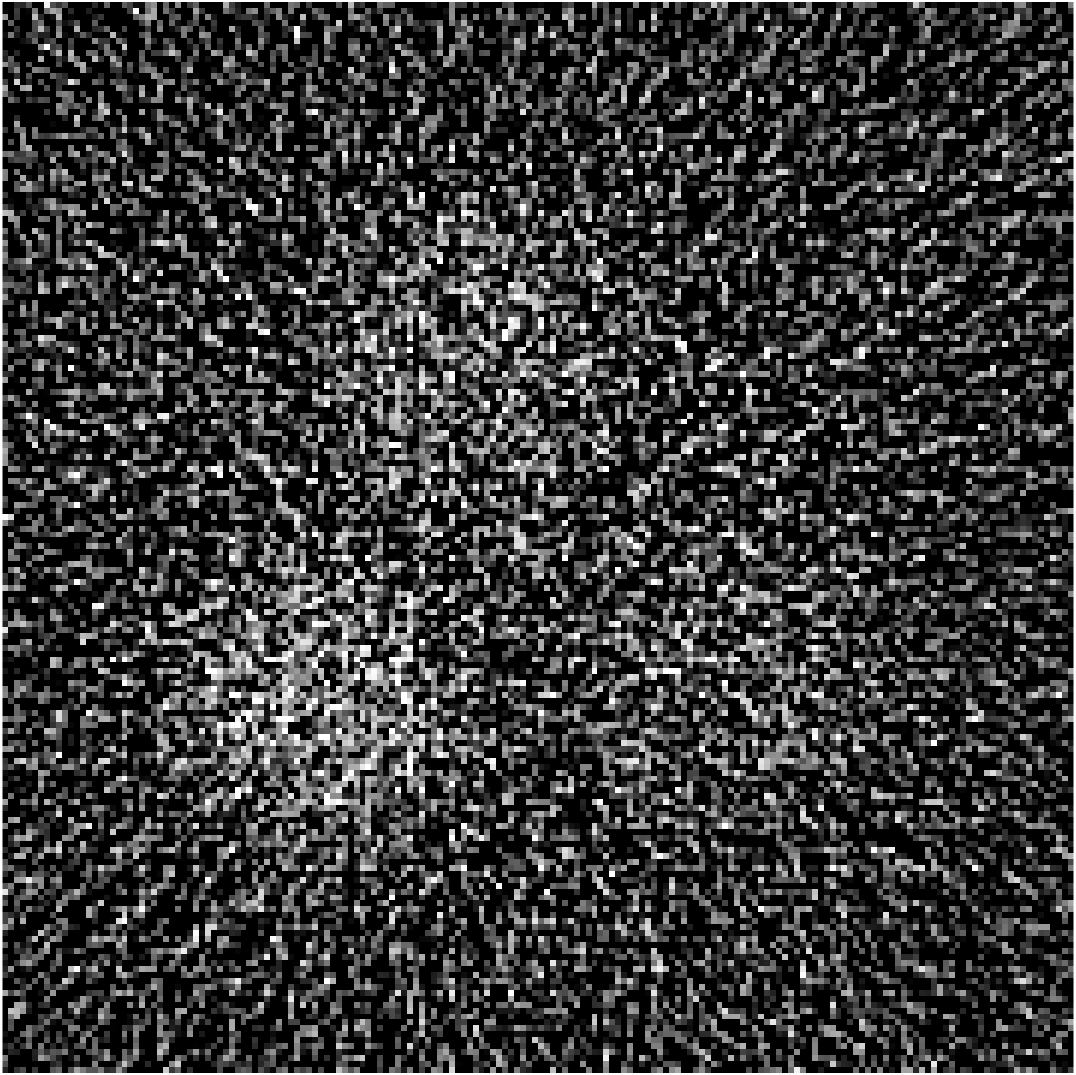}}
\put(220,0){\includegraphics[scale=0.4]{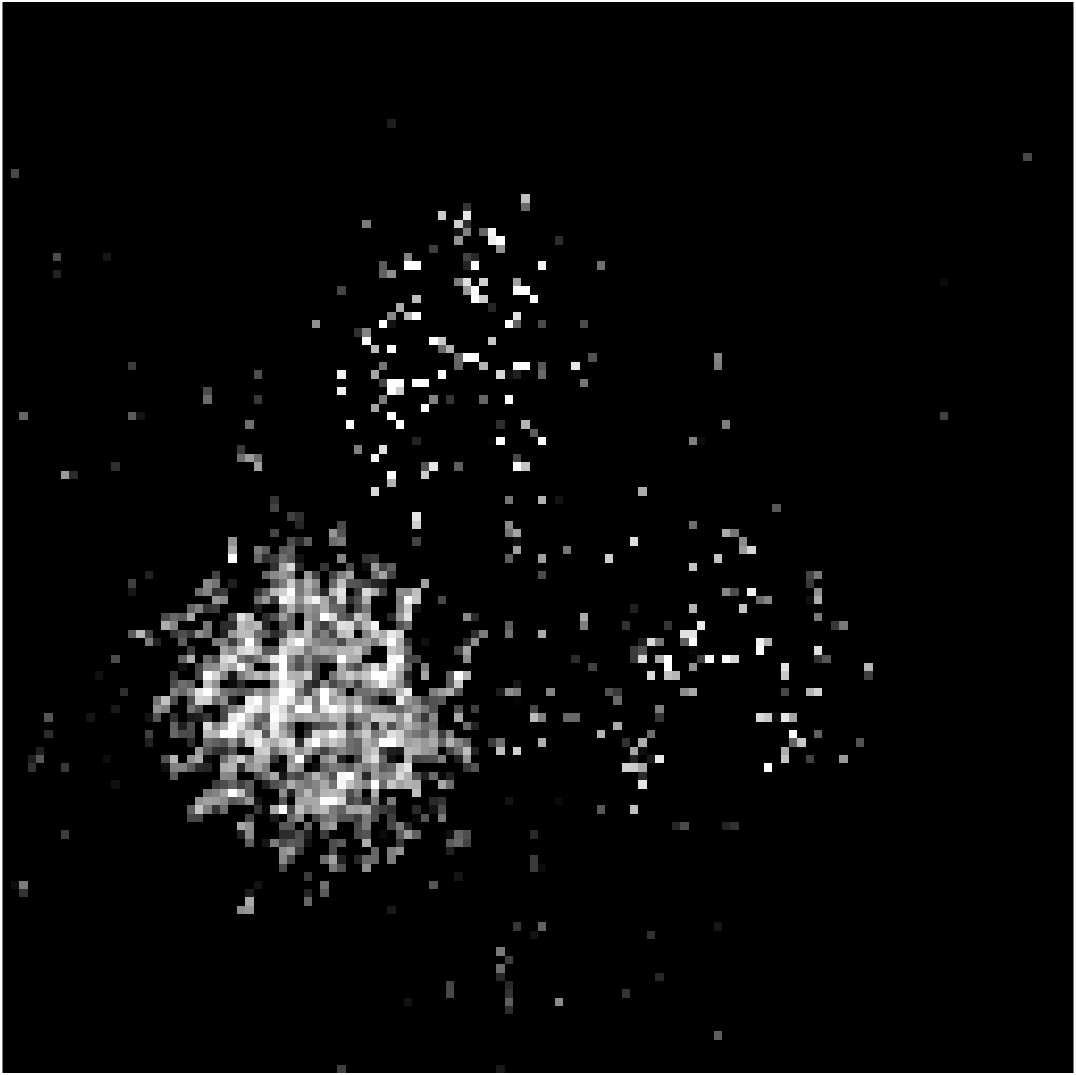}}
\put(-20,315){(a)}
\put(-20,205){(b)}
\put(-20,50){(c)}
 \end{picture}       
    \caption{Reconstruction results. Row (a) represents the FBP reconstructions, with the phantom matrix. Row (b) shows the result with the FBP method, when we have subtracted one measurement (taken below all the K-edges) from all sinograms before reconstruction. This subtraction removes the PMMA phantom matrix from the reconstructions. Row (b): IP reconstructions (also with subtracted phantom matrix). }
    \label{fig:real_data_results}
\end{figure}

\section{Conclusions}

The Inner Product reconstruction approach seems to be capable for material decomposition in the case of real, monochromatic X-ray data. The reconstruction results show clearly where the different materials are located in the sample, even though there is some noticeable dotted noise around the samples. The theory of this method has been shown to work for multiple materials and X-ray energies and in this article we demonstrated the working in the case of three energies and three materials. 
As a drawback, we can notice that there must be clear difference in the attenuation coefficients of the materials with different energies for the proper working of the method. This is achieved using monochromatic X-ray beam and by selecting the used X-ray energies wisely just below and over the K-edge peak of the materials to be decomposed. When we keep these limitations in mind, the method works as a reliable and simple tool for material separation and identification.

\bibliographystyle{plain}
\bibliography{mybibliography}
\end{document}